\let\origvec=\vec
\newtheorem{theorem}{Theorem}[section]
\newcommand{\qed}{\hfill\ensuremath{\blacksquare}}
\newenvironment{proof}{\textbf{Proof.}}{\qed}
\newcommand{\N}{\ensuremath{\mathbb{N}}}
\newcommand{\ccls}[1]{\ensuremath{\mathsf{#1}}}
\DeclareMathOperator{\NP}{\ccls{NP}}
\DeclareMathOperator{\PSPACE}{\ccls{PSPACE}}
\title{Motion planning with pull moves}
\author{Marcus Ritt\thanks{Departamento de Informática Teorica, Instituto de
  Informática,
  Universidade Federal do Rio Grande do Sul, Porto
  Alegre, Brasil}}
\date{}
\newif\iftrees
\begin{document}
\thispagestyle{empty}
\maketitle

\begin{abstract}
  It is well known that Sokoban is
  $\ccls{PSPACE}$-complete~\citep{Culberson/1998} and several of its
  variants are $\NP$-hard~\citep{Demaine.etal/2003}. In this paper we
  prove the $\NP$-hardness of some variants of Sokoban where the
  warehouse keeper can only pull boxes.
\end{abstract}

\section{Introduction}

Sokoban is a game on an integer grid, where a warehouse keeper, or
robot, has to push boxes to designated storage locations. He can push
a box one cell horizontally or vertically if the destination cell is
free. Some of the grid cells can be occupied by unmovable obstacles.

Several variants of Sokoban have been studied. In these variants all
obstacles may be movable, the keeper may push up to $k$ boxes, or even
an unlimited number; to solve the game it can be sufficient to move
the keeper to some goal position, and the boxes once pushed may slide
until hitting an obstacle.

Table~\ref{tab:status} shows the complexity of some of these
variants. In the table the game type indicates the possible
movements. In ``Push'' games, each box moves one unit, while in
``PushPush'' a box slides until it hits the next obstacle. The
following number indicates the number of boxes which can be moved at
once, where ``*'' stands for infinity. The game type has a suffix
``-F'' when fixed obstacles are allowed. The classic Sokoban game has
game type Push-1-F.

\begin{table}[t]
  \centering
  \caption{\small Time complexity of box-moving problems. Hardness of a
    variant with all boxes movable implies hardness of the
    corresponding variant with some fixed boxes, therefore the 
    table shows only known stronger results.}
  \label{tab:status}

  \begin{minipage}{1.0\linewidth}
  \footnotesize
  \centering
  \newcommand\IY{\ensuremath{\infty}}
  \begin{tabular}{lll}
    \hline
    Game type             & Complexity& Reference\\
    \hline              
    \multicolumn{3}{c}{Path version}\\
    \hline
    Push-1                & $\NP$-hard  & \citet{Demaine.etal/2000}\\
    Push-k                & $\NP$-hard  & \citet{Demaine.etal/2003}\\
    Push-*                & $\NP$-hard  & \citet{Hoffmann/2000}\\
    PushPush-1            & $\PSPACE$-complete & \citet{Demaine.etal/2004}\\
    PushPush-k            & $\PSPACE$-complete & \citet{Demaine.etal/2004}\\
    PushPush-*            & $\NP$-hard  & \citet{Demaine.etal/2003}\\
    Push-k-F              & $\PSPACE$-complete & \citet{Demaine.etal/2002}\\
    Push-*-F              & $\PSPACE$-complete & \citet{Bremner.etal/1994}\\
    Push-k-Pull-1-F       & $\NP$-hard for $k\geq 5$ & \citet{Dor.Zwick/1999}\\
    \hline
    \multicolumn{3}{c}{Storage version}\\
    \hline
    Push-1                & $\PSPACE$-complete  & \citet{Hearn.Demaine/2005}\\
    PushPush-1            & $\NP$-hard  & \citet{Rourke.etal/1999}\\
    Push-1-F              & $\PSPACE$-complete & \citet{Culberson/1998}\\
    Push-k-Pull-1-F       & $\NP$-hard for $k\geq 5$ & \citet{Dor.Zwick/1999}\\
    \hline
  \end{tabular}
  \end{minipage}
\end{table}

\citet{Wilfong/1988} shows that motion planning where the robot can
push and pull polygonal obstacles is $\NP$-hard if we want to decide
if the robot can reach a goal position, and $\PSPACE$-complete when
goal positions for the obstacles are given. \citet{Dor.Zwick/1999}
show that Sokoban is also $\PSPACE$-complete when the robot can push
up to two boxes or pull one box, and the boxes have size $2\times
1$.
A version of Sokoban where the robot can push as well as pull the
boxes is available under the name Pukoban~\citep{Pukoban/2008}.

\citet{Polishchuk/2004} studies the optimization version of these
problems where we want to decide if there exists a solution in less
than $k$ box moves (the instances are all trivially solvable).
He shows that all of the above variations with designated storage
locations and a variation where the robot additionally can lift any
number of boxes are $\NP$-hard.

In this paper, we study the decision version of the pull-only path
variants of Sokoban (``Pull'', ``PullPull''). Different from the
pushing variants, where all adjacent boxes always have to move
together, we let the robot decide how many adjacent boxes he wants to
pull. We prove $\NP$-hardness of all Pull-$k$-F and PullPull-$k$-F and
the Pull-$1$ variants. Our proof is by reduction from planar
$3$-colorability. Its overall strategy is equal to the hardness proof
of \citet{Demaine.etal/2003} for Push-$1$. We simplify their proof by
introducing a new kind of elementary gadget (branch). In the next
section we give an overview of their approach. In
Section~\ref{sec:hardness} we show how to implement the necessary
gadgets in a pull variant of the game.

\section{Hardness of Push-$1$}

The hardness proof for Push-$1$ of \citep{Demaine.etal/2003} is by
reduction from planar $3$-coloring. The authors construct an instance
of Push-$1$, which is solvable iff a given undirected planar graph $G$
permits a $3$-coloring. The proof uses the fact that there always
exists a planar Eulerian tour $T$ in $\origvec G$, where $\origvec G$
is the directed graph obtained by substituting every undirected edge
by a pair of directed edges. The tour $T$ is augmented by decision
elements, which force the choice of a color when leaving a vertex on
its traversal. To guarantee a valid coloring, $T$ is further augmented
by consistency and coloring junctions, which force vertex colors to be
chosen consistently when a vertex is visited more than once, and
adjacent vertices to be of a different
color. Figure~\ref{fig:transformation} shows an example of a graph,
its directed version, a possible Euler tour, and the graph augmented
by decision, consistency and coloring elements.

Such a tour and its additional elements ensuring a consistent
traversal are encoded into a Push-$1$ puzzle using four kinds of basic
gadgets: (a) A one-way gadget with two entrances $A$ and $B$, that can
be traversed only in the $A$-$B$-direction. It stays open if it has
been traversed once. (b) A $1$-to-$3$-fork gadget, with four entrances
$A$ to $D$. Coming from the entrance $A$, it can be left at any other
entrance. This is choice is fixed when entering from $A$ again. (c) An
XOR-crossing gadget with an entrance on each side. It can be traversed
either from top to bottom or from left to right. This choice is fixed
in future traversals. (d) A NAND-gadget with an entrance on each
side. It can be traversed either from top to left or from bottom to
right. As for the XOR-gadget, this choice is fixed in future
traversals. The transformation substitutes the simple Eulerian path by
three parallel paths, one for each color. Figure~\ref{fig:elements}
shows the symbols for the basic and the composite elements, and
Fig.~\ref{fig:gadgets} shows how the composite elements can be
constructed from the basic ones.

\def\graphscale{1.7}
\begin{figure}
  \centering
  \begin{tikzpicture}
    [scale=\graphscale,every node/.style={fill=blue!20,circle}]
    \node (a) at (0,0) {};
    \node (b) at (1,0) {};
    \node (c) at (1,1) {};
    \node (d) at (0,1) {};
    \draw (a) to (b) to (c) to (d) to (a);
    \draw (a) to (c);
  \end{tikzpicture}\quad
  \begin{tikzpicture}
    [scale=\graphscale,every node/.style={fill=blue!20,circle}]
    \node (a) at (0,0) {};
    \node (b) at (1,0) {};
    \node (c) at (1,1) {};
    \node (d) at (0,1) {};
    \draw[->,bend left=10] (a) to (b);
    \draw[->,bend left=10] (b) to (c);
    \draw[->,bend left=10] (c) to (d);
    \draw[->,bend left=10] (d) to (a);
    \draw[->,bend left=10] (a) to (c);
    \draw[<-,bend right=10] (a) to (b);
    \draw[<-,bend right=10] (b) to (c);
    \draw[<-,bend right=10] (c) to (d);
    \draw[<-,bend right=10] (d) to (a);
    \draw[<-,bend right=10] (a) to (c);
  \end{tikzpicture}\quad
  \begin{tikzpicture}
    [scale=\graphscale,every node/.style={fill=blue!20,circle}]
    \node (a) at (0,0) {};
    \node (b) at (1,0) {};
    \node (c) at (1,1) {};
    \node (d) at (0,1) {};
    \draw[rounded corners,red] (b.south) to (a.south) to (a.west) to
    (d.west) to (d.north) to (c.north) to (c.east) to
    (b.north east) to (b.north west) to (c.south) to (a.east) to
    (a.north east) to (c.south west) to (d.south east) to (a.center)
    to (b.center) to (b.south);
  \end{tikzpicture}\\[0.5cm]
  \includegraphics[width=0.6\linewidth]{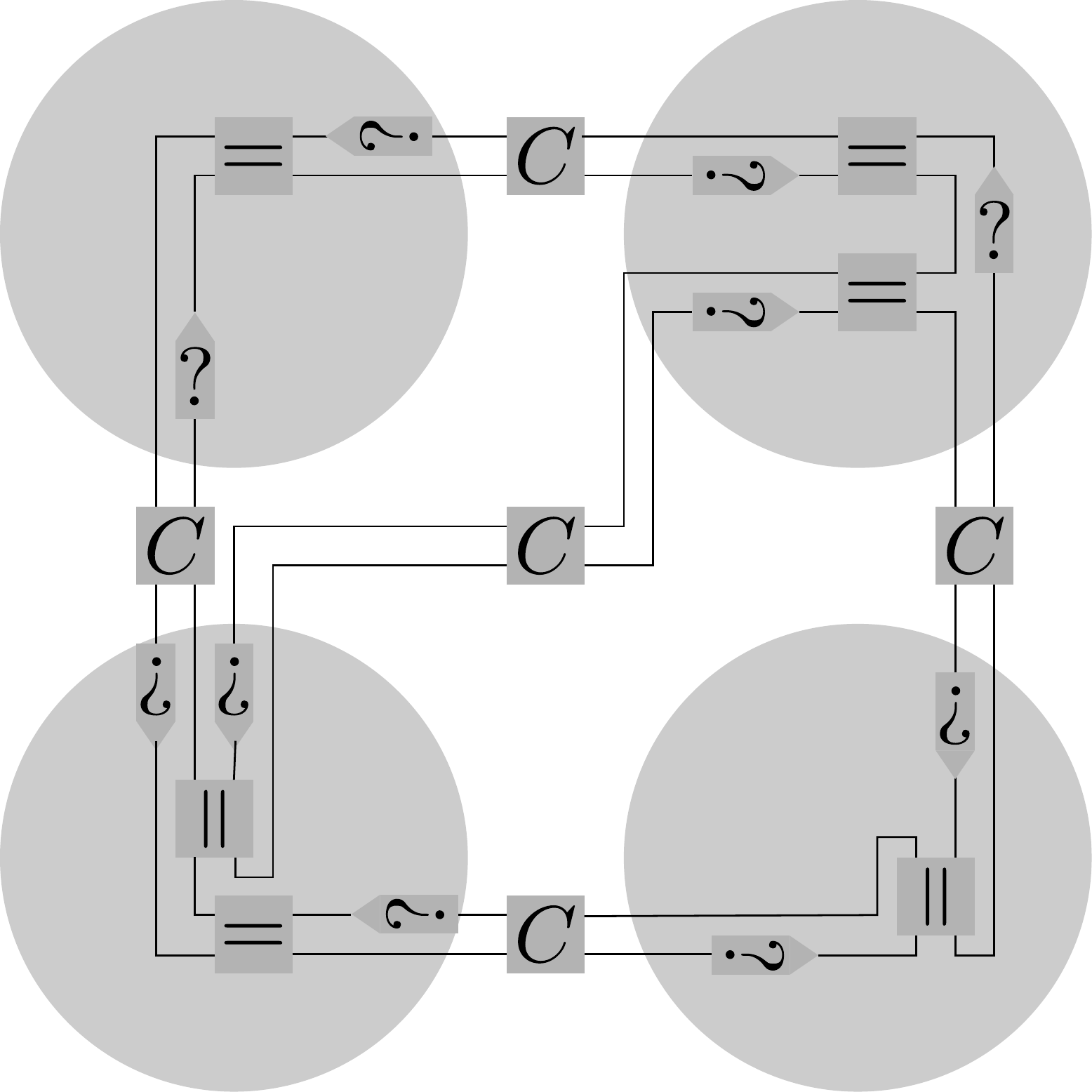}
  \caption{\small Example of a graph, its directed version, a planar Euler
    tour and the transformed tour.}
  \label{fig:transformation}
\end{figure}

\def\gadgetscale{0.75}
\begin{figure}
  \centering
  \includegraphics[scale=\gadgetscale]{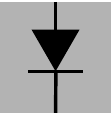}
  \includegraphics[scale=\gadgetscale]{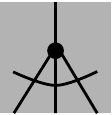}
  \includegraphics[scale=\gadgetscale]{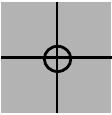}
  \includegraphics[scale=\gadgetscale]{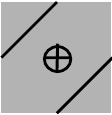}\\[0.25cm]
  \includegraphics[scale=\gadgetscale]{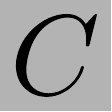}
  \includegraphics[scale=\gadgetscale]{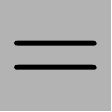}
  \includegraphics[scale=\gadgetscale]{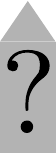}
  \caption{\small Gadgets used in the hardness proof
    of~\citet{Demaine.etal/2003}. Upper row: Basic gadgets, from left
    to right: one-way gadget, 1-to-3 fork, XOR crossing gadget, NAND
    gadget. Lower row: Composite gadgets, from left to right: coloring
    gadget, consistency gadget, color choosing gadget.}
  \label{fig:elements}
\end{figure}

\begin{figure}
  \centering
  \includegraphics[width=0.6\linewidth]{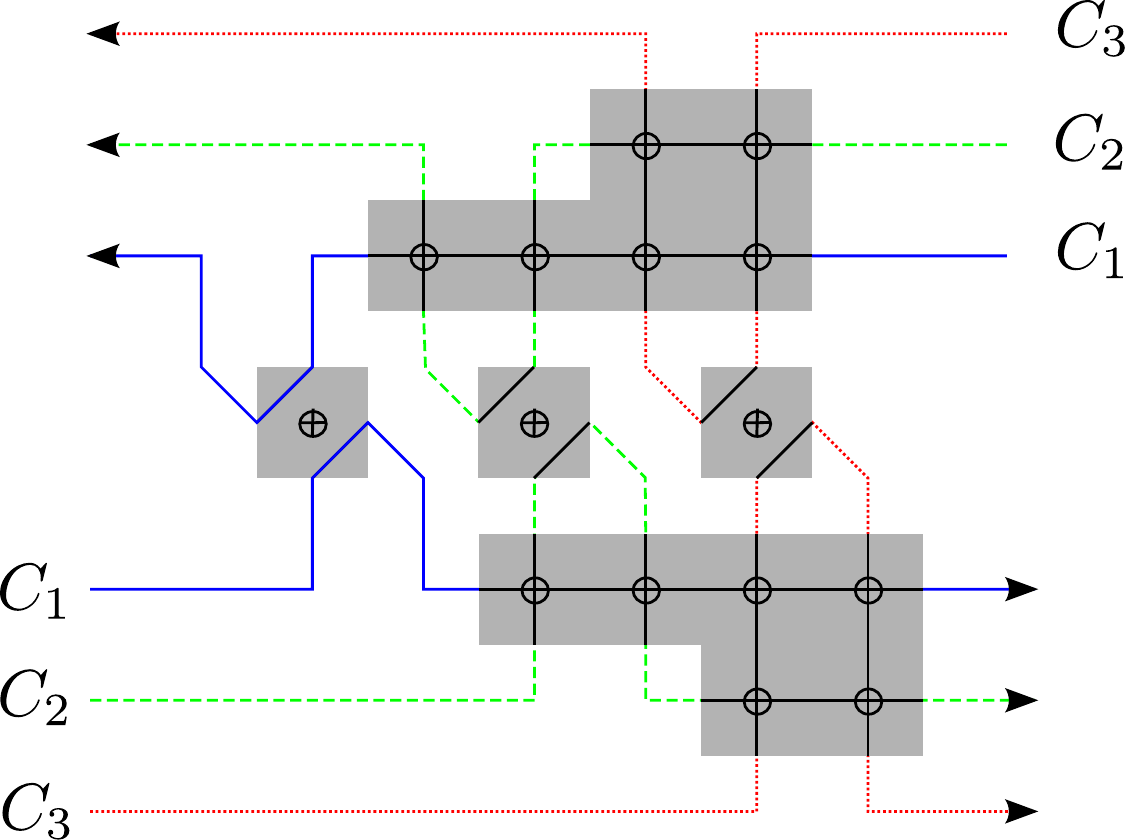}\qquad
  \includegraphics[width=0.85\linewidth]{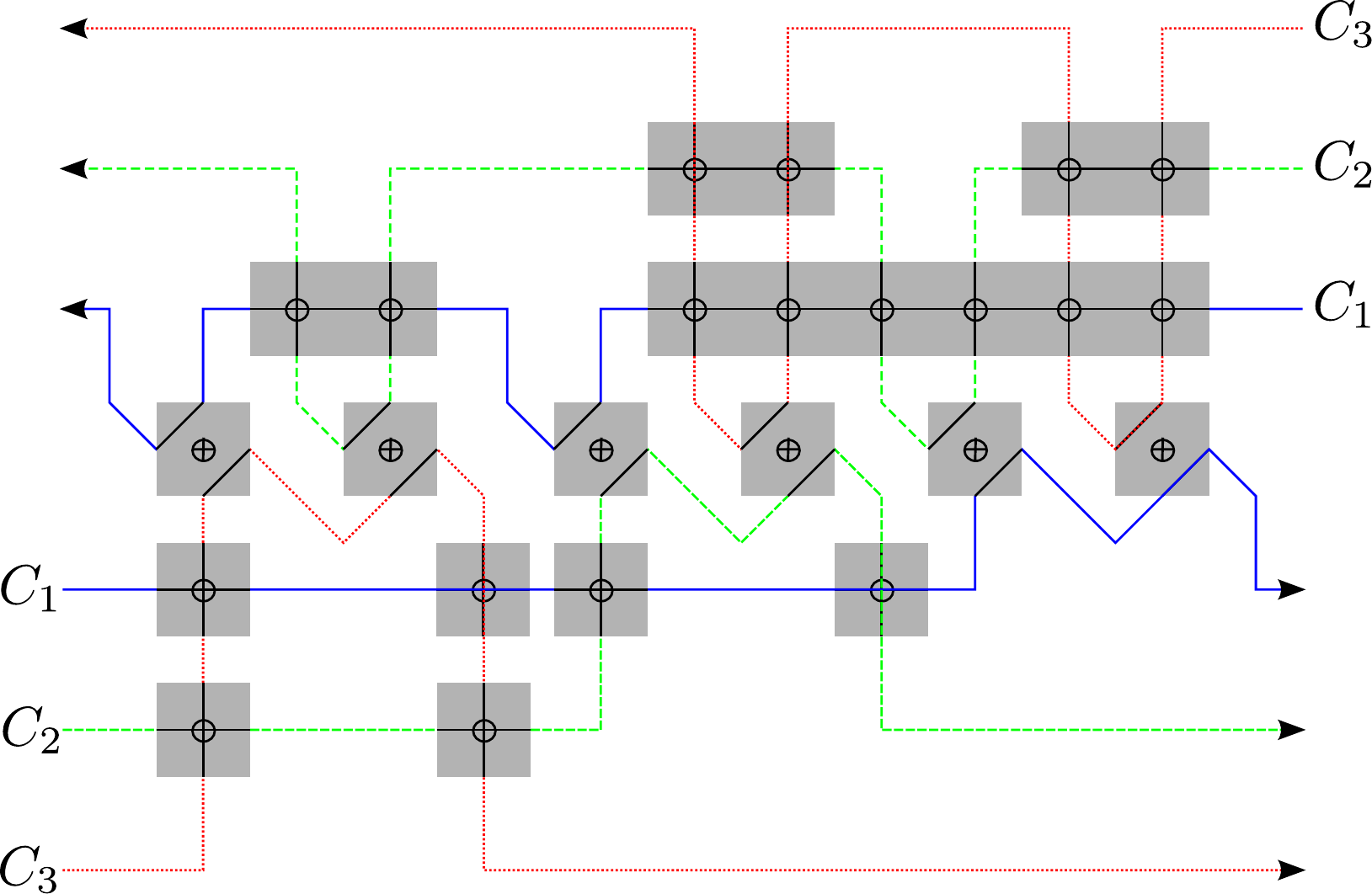}\\[0.5cm]
  \includegraphics[width=0.4\linewidth]{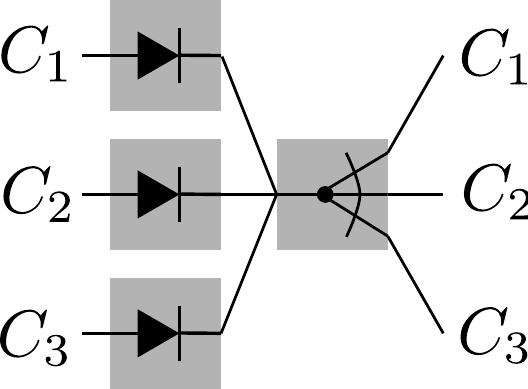}
  \caption{\small Implementations of composite gadgets. Top: Coloring
    gadget, traversable in both directions using differently labeled
    paths. Middle: consistency gadget, traversable in both
    direction only using path of the same label. Bottom: color choosing
    gadget~\citep{Demaine.etal/2003}.}
  \label{fig:gadgets}
\end{figure}

\section{Hardness of Pull-$1$-F}
\label{sec:hardness}

In this section we show that Pull-$1$-F is $\NP$-hard by giving
``pull''-implementations of the basic gadgets introduced in the
previous section. We first simplify the constructions by introducing a
branch gadget shown in Figure~\ref{fig:3xor}. It can be traversed from
$A$ to $B$ by pulling box $1$ down, permanently blocking exit $C$, or
from $B$ to either $A$ or $C$, first pulling box $1$ three positions
to the left, and then box $2$ up, if necessary, leaving all paths
permanently open. Entering from $C$ is not possible.

\begin{figure}
  \centering
  {\includegraphics[origin=c,angle=270]{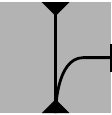}}\qquad
\begin{tikzpicture}[scale=0.3]
\draw[gray] (0,0) grid (10,8);
\filldraw[fill=blue!50,draw=black] (0,7) rectangle (1,8);
\filldraw[fill=blue!50,draw=black] (1,7) rectangle (2,8);
\filldraw[fill=blue!50,draw=black] (2,7) rectangle (3,8);
\filldraw[fill=blue!50,draw=black] (3,7) rectangle (4,8);
\filldraw[fill=blue!50,draw=black] (4,7) rectangle (5,8);
\filldraw[fill=blue!50,draw=black] (5,7) rectangle (6,8);
\filldraw[fill=blue!50,draw=black] (6,7) rectangle (7,8);
\filldraw[fill=blue!50,draw=black] (7,7) rectangle (8,8);
\filldraw[fill=blue!50,draw=black] (8,7) rectangle (9,8);
\filldraw[fill=blue!50,draw=black] (9,7) rectangle (10,8);
\filldraw[fill=blue!50,draw=black] (0,6) rectangle (1,7);
\filldraw[fill=blue!50,draw=black] (1,6) rectangle (2,7);
\filldraw[fill=blue!50,draw=black] (2,6) rectangle (3,7);
\filldraw[fill=blue!50,draw=black] (3,6) rectangle (4,7);
\filldraw[fill=blue!50,draw=black] (4,6) rectangle (5,7);
\filldraw[fill=blue!50,draw=black] (5,6) rectangle (6,7);
\filldraw[fill=blue!50,draw=black] (6,6) rectangle (7,7);
\filldraw[fill=blue!50,draw=black] (7,6) rectangle (8,7);
\filldraw[fill=blue!50,draw=black] (8,6) rectangle (9,7);
\filldraw[fill=blue!50,draw=black] (9,6) rectangle (10,7);
\filldraw[fill=blue!50,draw=black] (5,5) rectangle (6,6);
\filldraw[fill=blue!50,draw=black] (6,5) rectangle (7,6);
\filldraw[fill=blue!50,draw=black] (7,5) rectangle (8,6);
\filldraw[fill=blue!50,draw=black] (8,5) rectangle (9,6);
\filldraw[fill=blue!50,draw=black] (9,5) rectangle (10,6);
\filldraw[fill=blue!50,draw=black] (0,4) rectangle (1,5);
\filldraw[fill=blue!50,draw=black] (1,4) rectangle (2,5);
\filldraw[fill=yellow,draw=black] (6,4) rectangle (7,5);
\filldraw[fill=blue!50,draw=black] (8,4) rectangle (9,5);
\filldraw[fill=blue!50,draw=black] (9,4) rectangle (10,5);
\filldraw[fill=blue!50,draw=black] (0,3) rectangle (1,4);
\filldraw[fill=blue!50,draw=black] (1,3) rectangle (2,4);
\filldraw[fill=blue!50,draw=black] (2,3) rectangle (3,4);
\filldraw[fill=blue!50,draw=black] (3,3) rectangle (4,4);
\filldraw[fill=blue!50,draw=black] (4,3) rectangle (5,4);
\filldraw[fill=blue!50,draw=black] (5,3) rectangle (6,4);
\filldraw[fill=blue!50,draw=black] (8,3) rectangle (9,4);
\filldraw[fill=blue!50,draw=black] (9,3) rectangle (10,4);
\filldraw[fill=blue!50,draw=black] (0,2) rectangle (1,3);
\filldraw[fill=blue!50,draw=black] (1,2) rectangle (2,3);
\filldraw[fill=blue!50,draw=black] (2,2) rectangle (3,3);
\filldraw[fill=blue!50,draw=black] (3,2) rectangle (4,3);
\filldraw[fill=blue!50,draw=black] (4,2) rectangle (5,3);
\filldraw[fill=blue!50,draw=black] (5,2) rectangle (6,3);
\filldraw[fill=blue!50,draw=black] (0,1) rectangle (1,2);
\filldraw[fill=blue!50,draw=black] (1,1) rectangle (2,2);
\filldraw[fill=blue!50,draw=black] (2,1) rectangle (3,2);
\filldraw[fill=blue!50,draw=black] (3,1) rectangle (4,2);
\filldraw[fill=blue!50,draw=black] (4,1) rectangle (5,2);
\filldraw[fill=blue!50,draw=black] (5,1) rectangle (6,2);
\filldraw[fill=yellow,draw=black] (6,1) rectangle (7,2);
\filldraw[fill=blue!50,draw=black] (7,1) rectangle (8,2);
\filldraw[fill=blue!50,draw=black] (8,1) rectangle (9,2);
\filldraw[fill=blue!50,draw=black] (9,1) rectangle (10,2);
\filldraw[fill=blue!50,draw=black] (0,0) rectangle (1,1);
\filldraw[fill=blue!50,draw=black] (1,0) rectangle (2,1);
\filldraw[fill=blue!50,draw=black] (2,0) rectangle (3,1);
\filldraw[fill=blue!50,draw=black] (3,0) rectangle (4,1);
\filldraw[fill=blue!50,draw=black] (4,0) rectangle (5,1);
\filldraw[fill=blue!50,draw=black] (5,0) rectangle (6,1);
\filldraw[fill=blue!50,draw=black] (7,0) rectangle (8,1);
\filldraw[fill=blue!50,draw=black] (8,0) rectangle (9,1);
\filldraw[fill=blue!50,draw=black] (9,0) rectangle (10,1);
\node at (9.5,2.5) { $A$ };
\node at (0.5,5.5) { $B$ };
\node at (6.5,0.5) { $C$ };
\node at (6.5,4.5) { $1$ };
\node at (6.5,1.5) { $2$ };
\end{tikzpicture}

\caption{\small The branch gadget. Traversal from $A$ to $B$ is possible,
  blocking a future exit to $C$, or traversal from $B$ to $A$ or $C$
  is possible.}
  \label{fig:3xor}
\end{figure}

\begin{figure}
  \centering
  \includegraphics[scale=0.6]{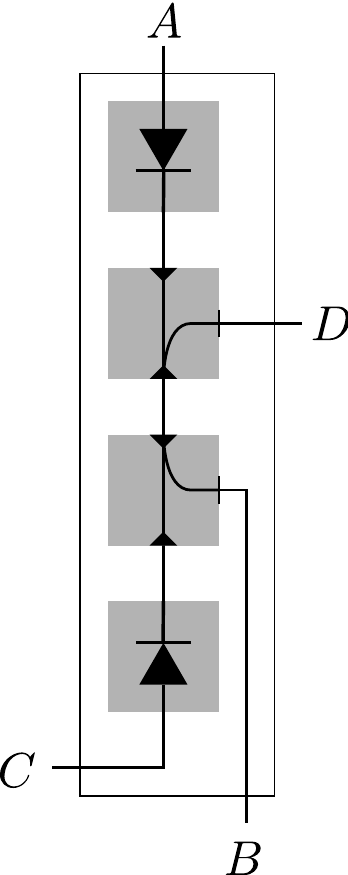}\qquad
  \includegraphics[scale=0.6]{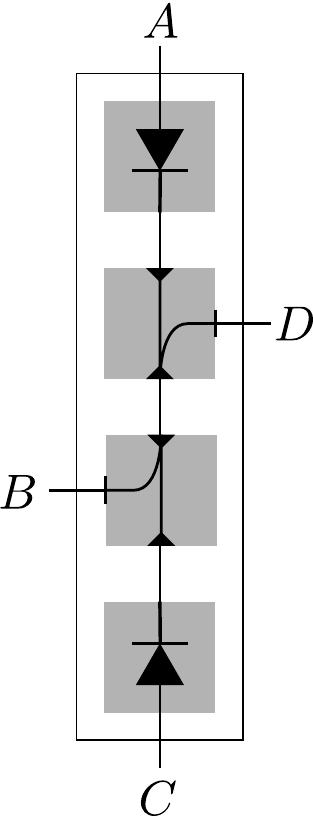}
  \caption{\small Left: XOR gadget. Right: NAND gadget.}
  \label{fig:xornand}
\end{figure}

The branch gadget can be used to implement the XOR and NAND gadgets,
as shown in Figure~\ref{fig:xornand}. The XOR-gadget
(Figure~\ref{fig:xornand}, left) can be traversed from $A$ to $B$ or
from $C$ to $D$. Traversal from $A$ to $D$ is impossible, since there
is not enough space to unblock the path to $D$ inside the branch
gadget. The exit to $C$ is protected by a one-way gadget. Similarly,
traversal from $C$ to $B$ or $A$ is not possible.

The NAND-gadget is shown in Figure~\ref{fig:xornand} (right). It can
be traversed from $A$ to $B$ or from $C$ to $D$. Traversing one path
blocks a future traversal of the other. The implementation of the
one-way gadget is shown in
Figure~\ref{fig:onewayfork}(a)\footnote{Different from the
  push-implementation in \citet{Demaine.etal/2003} a traversal from
  $B$ to $A$ is impossible. For the proof it is admissible that after
  a traversal from $A$ to $B$ the gadget stays open.}. The
$1$-to-$3$-fork gadget is implemented by cascading two $1$-to-$2$-fork
gadgets, shown in Figure~\ref{fig:onewayfork}(b). Entering at $A$ the
robot selects an exit by pulling either box $1$ first right and then
up or the box $2$ left and up, a choice which cannot be undone
entering again from $B$ or $C$.

\begin{figure}
  \centering
   \begin{tabular*}{0.45\linewidth}{c}
\begin{tikzpicture}[scale=0.3]
\draw[gray] (0,0) grid (10,9);
\filldraw[fill=blue!50,draw=black] (0,8) rectangle (1,9);
\filldraw[fill=blue!50,draw=black] (1,8) rectangle (2,9);
\filldraw[fill=blue!50,draw=black] (2,8) rectangle (3,9);
\filldraw[fill=blue!50,draw=black] (3,8) rectangle (4,9);
\filldraw[fill=blue!50,draw=black] (4,8) rectangle (5,9);
\filldraw[fill=blue!50,draw=black] (5,8) rectangle (6,9);
\filldraw[fill=blue!50,draw=black] (6,8) rectangle (7,9);
\filldraw[fill=blue!50,draw=black] (7,8) rectangle (8,9);
\filldraw[fill=blue!50,draw=black] (8,8) rectangle (9,9);
\filldraw[fill=blue!50,draw=black] (9,8) rectangle (10,9);
\filldraw[fill=blue!50,draw=black] (0,7) rectangle (1,8);
\filldraw[fill=blue!50,draw=black] (1,7) rectangle (2,8);
\filldraw[fill=blue!50,draw=black] (2,7) rectangle (3,8);
\filldraw[fill=blue!50,draw=black] (3,7) rectangle (4,8);
\filldraw[fill=blue!50,draw=black] (4,7) rectangle (5,8);
\filldraw[fill=blue!50,draw=black] (5,7) rectangle (6,8);
\filldraw[fill=blue!50,draw=black] (6,7) rectangle (7,8);
\filldraw[fill=blue!50,draw=black] (7,7) rectangle (8,8);
\filldraw[fill=blue!50,draw=black] (8,7) rectangle (9,8);
\filldraw[fill=blue!50,draw=black] (9,7) rectangle (10,8);
\filldraw[fill=blue!50,draw=black] (0,6) rectangle (1,7);
\filldraw[fill=blue!50,draw=black] (1,6) rectangle (2,7);
\filldraw[fill=blue!50,draw=black] (2,6) rectangle (3,7);
\filldraw[fill=blue!50,draw=black] (3,6) rectangle (4,7);
\filldraw[fill=blue!50,draw=black] (4,6) rectangle (5,7);
\filldraw[fill=blue!50,draw=black] (5,6) rectangle (6,7);
\filldraw[fill=blue!50,draw=black] (0,5) rectangle (1,6);
\filldraw[fill=blue!50,draw=black] (1,5) rectangle (2,6);
\filldraw[fill=blue!50,draw=black] (2,5) rectangle (3,6);
\filldraw[fill=blue!50,draw=black] (3,5) rectangle (4,6);
\filldraw[fill=blue!50,draw=black] (4,5) rectangle (5,6);
\filldraw[fill=blue!50,draw=black] (5,5) rectangle (6,6);
\filldraw[fill=blue!50,draw=black] (7,5) rectangle (8,6);
\filldraw[fill=blue!50,draw=black] (8,5) rectangle (9,6);
\filldraw[fill=blue!50,draw=black] (9,5) rectangle (10,6);
\filldraw[fill=blue!50,draw=black] (0,4) rectangle (1,5);
\filldraw[fill=blue!50,draw=black] (1,4) rectangle (2,5);
\filldraw[fill=blue!50,draw=black] (2,4) rectangle (3,5);
\filldraw[fill=blue!50,draw=black] (3,4) rectangle (4,5);
\filldraw[fill=blue!50,draw=black] (4,4) rectangle (5,5);
\filldraw[fill=blue!50,draw=black] (5,4) rectangle (6,5);
\filldraw[fill=blue!50,draw=black] (7,4) rectangle (8,5);
\filldraw[fill=blue!50,draw=black] (8,4) rectangle (9,5);
\filldraw[fill=blue!50,draw=black] (9,4) rectangle (10,5);
\filldraw[fill=blue!50,draw=black] (0,3) rectangle (1,4);
\filldraw[fill=blue!50,draw=black] (1,3) rectangle (2,4);
\filldraw[fill=yellow,draw=black] (5,3) rectangle (6,4);
\filldraw[fill=blue!50,draw=black] (8,3) rectangle (9,4);
\filldraw[fill=blue!50,draw=black] (9,3) rectangle (10,4);
\filldraw[fill=blue!50,draw=black] (0,2) rectangle (1,3);
\filldraw[fill=blue!50,draw=black] (1,2) rectangle (2,3);
\filldraw[fill=blue!50,draw=black] (3,2) rectangle (4,3);
\filldraw[fill=blue!50,draw=black] (4,2) rectangle (5,3);
\filldraw[fill=blue!50,draw=black] (8,2) rectangle (9,3);
\filldraw[fill=blue!50,draw=black] (9,2) rectangle (10,3);
\filldraw[fill=blue!50,draw=black] (0,1) rectangle (1,2);
\filldraw[fill=blue!50,draw=black] (1,1) rectangle (2,2);
\filldraw[fill=blue!50,draw=black] (3,1) rectangle (4,2);
\filldraw[fill=blue!50,draw=black] (4,1) rectangle (5,2);
\filldraw[fill=blue!50,draw=black] (5,1) rectangle (6,2);
\filldraw[fill=blue!50,draw=black] (6,1) rectangle (7,2);
\filldraw[fill=blue!50,draw=black] (7,1) rectangle (8,2);
\filldraw[fill=blue!50,draw=black] (8,1) rectangle (9,2);
\filldraw[fill=blue!50,draw=black] (9,1) rectangle (10,2);
\filldraw[fill=blue!50,draw=black] (0,0) rectangle (1,1);
\filldraw[fill=blue!50,draw=black] (1,0) rectangle (2,1);
\filldraw[fill=blue!50,draw=black] (3,0) rectangle (4,1);
\filldraw[fill=blue!50,draw=black] (4,0) rectangle (5,1);
\filldraw[fill=blue!50,draw=black] (5,0) rectangle (6,1);
\filldraw[fill=blue!50,draw=black] (6,0) rectangle (7,1);
\filldraw[fill=blue!50,draw=black] (7,0) rectangle (8,1);
\filldraw[fill=blue!50,draw=black] (8,0) rectangle (9,1);
\filldraw[fill=blue!50,draw=black] (9,0) rectangle (10,1);
\node at (2.5,0.5) { $B$ };
\node at (9.5,6.5) { $A$ };
\end{tikzpicture}\\(a)
   \end{tabular*}
   \begin{tabular*}{0.45\linewidth}{c}
\begin{tikzpicture}[scale=0.3]
\draw[gray] (0,0) grid (12,8);
\filldraw[fill=blue!50,draw=black] (0,7) rectangle (1,8);
\filldraw[fill=blue!50,draw=black] (1,7) rectangle (2,8);
\filldraw[fill=blue!50,draw=black] (3,7) rectangle (4,8);
\filldraw[fill=blue!50,draw=black] (4,7) rectangle (5,8);
\filldraw[fill=blue!50,draw=black] (5,7) rectangle (6,8);
\filldraw[fill=blue!50,draw=black] (6,7) rectangle (7,8);
\filldraw[fill=blue!50,draw=black] (7,7) rectangle (8,8);
\filldraw[fill=blue!50,draw=black] (8,7) rectangle (9,8);
\filldraw[fill=blue!50,draw=black] (10,7) rectangle (11,8);
\filldraw[fill=blue!50,draw=black] (11,7) rectangle (12,8);
\filldraw[fill=blue!50,draw=black] (0,6) rectangle (1,7);
\filldraw[fill=blue!50,draw=black] (1,6) rectangle (2,7);
\filldraw[fill=blue!50,draw=black] (3,6) rectangle (4,7);
\filldraw[fill=blue!50,draw=black] (4,6) rectangle (5,7);
\filldraw[fill=blue!50,draw=black] (5,6) rectangle (6,7);
\filldraw[fill=blue!50,draw=black] (6,6) rectangle (7,7);
\filldraw[fill=blue!50,draw=black] (7,6) rectangle (8,7);
\filldraw[fill=blue!50,draw=black] (8,6) rectangle (9,7);
\filldraw[fill=blue!50,draw=black] (10,6) rectangle (11,7);
\filldraw[fill=blue!50,draw=black] (11,6) rectangle (12,7);
\filldraw[fill=blue!50,draw=black] (0,5) rectangle (1,6);
\filldraw[fill=blue!50,draw=black] (1,5) rectangle (2,6);
\filldraw[fill=blue!50,draw=black] (3,5) rectangle (4,6);
\filldraw[fill=blue!50,draw=black] (4,5) rectangle (5,6);
\filldraw[fill=blue!50,draw=black] (7,5) rectangle (8,6);
\filldraw[fill=blue!50,draw=black] (8,5) rectangle (9,6);
\filldraw[fill=blue!50,draw=black] (10,5) rectangle (11,6);
\filldraw[fill=blue!50,draw=black] (11,5) rectangle (12,6);
\filldraw[fill=blue!50,draw=black] (0,4) rectangle (1,5);
\filldraw[fill=blue!50,draw=black] (1,4) rectangle (2,5);
\filldraw[fill=blue!50,draw=black] (3,4) rectangle (4,5);
\filldraw[fill=blue!50,draw=black] (4,4) rectangle (5,5);
\filldraw[fill=blue!50,draw=black] (7,4) rectangle (8,5);
\filldraw[fill=blue!50,draw=black] (8,4) rectangle (9,5);
\filldraw[fill=blue!50,draw=black] (10,4) rectangle (11,5);
\filldraw[fill=blue!50,draw=black] (11,4) rectangle (12,5);
\filldraw[fill=blue!50,draw=black] (0,3) rectangle (1,4);
\filldraw[fill=blue!50,draw=black] (1,3) rectangle (2,4);
\filldraw[fill=yellow,draw=black] (4,3) rectangle (5,4);
\filldraw[fill=yellow,draw=black] (7,3) rectangle (8,4);
\filldraw[fill=blue!50,draw=black] (10,3) rectangle (11,4);
\filldraw[fill=blue!50,draw=black] (11,3) rectangle (12,4);
\filldraw[fill=blue!50,draw=black] (0,2) rectangle (1,3);
\filldraw[fill=blue!50,draw=black] (1,2) rectangle (2,3);
\filldraw[fill=blue!50,draw=black] (2,2) rectangle (3,3);
\filldraw[fill=blue!50,draw=black] (3,2) rectangle (4,3);
\filldraw[fill=blue!50,draw=black] (4,2) rectangle (5,3);
\filldraw[fill=blue!50,draw=black] (6,2) rectangle (7,3);
\filldraw[fill=blue!50,draw=black] (7,2) rectangle (8,3);
\filldraw[fill=blue!50,draw=black] (8,2) rectangle (9,3);
\filldraw[fill=blue!50,draw=black] (9,2) rectangle (10,3);
\filldraw[fill=blue!50,draw=black] (10,2) rectangle (11,3);
\filldraw[fill=blue!50,draw=black] (11,2) rectangle (12,3);
\filldraw[fill=blue!50,draw=black] (0,1) rectangle (1,2);
\filldraw[fill=blue!50,draw=black] (1,1) rectangle (2,2);
\filldraw[fill=blue!50,draw=black] (2,1) rectangle (3,2);
\filldraw[fill=blue!50,draw=black] (3,1) rectangle (4,2);
\filldraw[fill=blue!50,draw=black] (4,1) rectangle (5,2);
\filldraw[fill=blue!50,draw=black] (6,1) rectangle (7,2);
\filldraw[fill=blue!50,draw=black] (7,1) rectangle (8,2);
\filldraw[fill=blue!50,draw=black] (8,1) rectangle (9,2);
\filldraw[fill=blue!50,draw=black] (9,1) rectangle (10,2);
\filldraw[fill=blue!50,draw=black] (10,1) rectangle (11,2);
\filldraw[fill=blue!50,draw=black] (11,1) rectangle (12,2);
\filldraw[fill=blue!50,draw=black] (0,0) rectangle (1,1);
\filldraw[fill=blue!50,draw=black] (1,0) rectangle (2,1);
\filldraw[fill=blue!50,draw=black] (2,0) rectangle (3,1);
\filldraw[fill=blue!50,draw=black] (3,0) rectangle (4,1);
\filldraw[fill=blue!50,draw=black] (4,0) rectangle (5,1);
\filldraw[fill=blue!50,draw=black] (6,0) rectangle (7,1);
\filldraw[fill=blue!50,draw=black] (7,0) rectangle (8,1);
\filldraw[fill=blue!50,draw=black] (8,0) rectangle (9,1);
\filldraw[fill=blue!50,draw=black] (9,0) rectangle (10,1);
\filldraw[fill=blue!50,draw=black] (10,0) rectangle (11,1);
\filldraw[fill=blue!50,draw=black] (11,0) rectangle (12,1);
\node at (5.5,0.5) { $A$ };
\node at (2.5,7.5) { $B$ };
\node at (9.5,7.5) { $C$ };
\node at (4.5,3.5) { $1$ };
\node at (7.5,3.5) { $2$ };
\end{tikzpicture}\\(b)

   \end{tabular*}
   \caption{\small(a) One-way gadget permitting traversal from $A$ to
     $B$. (b) $1$-to-$2$-fork gadget. A traversal from $A$ to either
     $B$ or $C$ is possible.}
  \label{fig:onewayfork}
\end{figure}

From the ``pull-implementations'' of the elementary gadgets above and
the proof of \citet{Demaine.etal/2003} we obtain:
\begin{theorem}
  Pull-$1$-F is $\NP$-hard.
\end{theorem}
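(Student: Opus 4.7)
The plan is to inherit the reduction from planar $3$-coloring used by \citet{Demaine.etal/2003}, but feed it our pull-implementations of the four basic gadgets instead of their push-implementations. Since planar $3$-coloring is $\NP$-hard, and the transformation from a planar graph $G$ to the augmented Eulerian tour (with decision, consistency and coloring junctions) only depends on the input/output behaviour of the basic gadgets, it suffices to check that each of our Pull-$1$-F gadgets realises the same behaviour as the corresponding Push-$1$ gadget. The construction is then obtained gadget-for-gadget on the grid, its size is polynomial in $|G|$, and by the equivalence a solution to the resulting Pull-$1$-F instance exists iff $G$ is $3$-colorable.

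Concretely, I would discharge the following verifications, each of which was established (informally) in the preceding figures and paragraphs. For the one-way gadget in Figure~\ref{fig:onewayfork}(a): show that a traversal from $A$ to $B$ is possible and that no sequence of pulls from $B$ lets the keeper reach $A$; the footnote already notes that leaving the gadget permanently open after the $A$-to-$B$ trip is admissible for the reduction. For the $1$-to-$3$-fork, cascade two $1$-to-$2$-forks from Figure~\ref{fig:onewayfork}(b) and argue that the first pull of box $1$ or box $2$ commits the keeper to one subsequent exit, and that re-entering at $B$ or $C$ cannot reopen the other side. For the XOR-crossing (Figure~\ref{fig:xornand}, left) and the NAND (Figure~\ref{fig:xornand}, right), argue using the behaviour of the branch gadget of Figure~\ref{fig:3xor}: an $A$-to-$B$ traversal pulls box~$1$ down and permanently seals the $C$-exit, while a $B$-to-$A$ or $B$-to-$C$ traversal keeps both exits available; the one-way gadgets protecting the entries of the XOR/NAND prevent the remaining illegitimate traversals (e.g.\ $A$-to-$D$ or $C$-to-$B$).

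The main obstacle is the last bullet: verifying that no unintended traversal exists. In a pull variant this is subtler than in a push variant, because the keeper can relocate a box to an entirely new position rather than merely sliding it one cell ahead. I would therefore argue each forbidden traversal by a case analysis on where each labelled box can be dragged: the geometry of the branch gadget leaves only a two-cell slot for box~$1$ and a three-cell slot for box~$2$, so after an $A$-to-$B$ commitment the exit to $C$ is genuinely plugged and cannot be freed by any later pull sequence in the surrounding XOR/NAND wiring. The one-way gadgets then ensure that the keeper cannot circumvent these blockages by approaching from the "wrong" side.

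Once these local correctness checks are in place, the proof concludes in one line: compose the gadgets according to the Demaine et al.\ scheme of Figure~\ref{fig:transformation}, observe that the resulting Pull-$1$-F instance is solvable iff the corresponding Euler tour admits a globally consistent $3$-coloring iff $G$ is $3$-colorable, and the polynomial-time reduction is complete. I expect the write-up to be short, with essentially all of the content concentrated in the gadget-correctness case analyses sketched above.
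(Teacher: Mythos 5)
Your proposal matches the paper's argument: the paper likewise obtains the theorem by plugging its pull-implementations of the one-way, fork, XOR (via the branch gadget) and NAND gadgets into the reduction framework of Demaine et al., with the correctness of each gadget argued locally from its figure exactly as you outline. The only difference is one of emphasis — you spell out the forbidden-traversal case analyses somewhat more explicitly than the paper, which treats them informally in the prose accompanying Figures~\ref{fig:3xor}--\ref{fig:onewayfork}.
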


\section{Hardness of variants}

The aobve constructions also apply to pulling several boxes.

\begin{theorem}
  Pull-$k$-F is $\NP$-hard for any $k\in\N\cup\{\mathrm{*}\}$.
\end{theorem}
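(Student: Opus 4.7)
The plan is to extend the Pull-$1$-F hardness to Pull-$k$-F and Pull-$*$-F by arguing that each of the gadgets from Section~\ref{sec:hardness} behaves identically when the robot is allowed to pull up to $k$ (or unboundedly many) consecutive boxes. Since the robot may always choose to pull from $0$ to $k$ boxes in any given move, Pull-$k$-F trivially permits every legal Pull-$1$-F play, so the forward direction of the reduction (yes-instance of planar $3$-coloring $\Rightarrow$ yes-instance of the puzzle) is inherited directly from the previous theorem. The work is to show that Pull-$k$ does not enable any new, unintended traversal of a gadget that would break the backward direction.

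The key lemma I would establish is an invariant for each basic gadget: in every configuration reachable from the initial one by legal pulls, no two boxes occupy cells adjacent along a common row or column. In the initial configuration this is clear by inspection of Figures~\ref{fig:3xor} and~\ref{fig:onewayfork}, since the (at most two) boxes in each gadget are separated by walls or empty buffer cells. I would then check, by a short case analysis per gadget, that each legal pull keeps the moving box inside a corridor from which it cannot reach a cell orthogonally adjacent to any other box: the walls of the branch gadget confine box~$1$ to the middle strip and box~$2$ to the bottom strip; in the $1$-to-$2$-fork, boxes $1$ and $2$ stay on opposite sides of the central column. Whenever this invariant holds, the robot can physically pull at most one box per move, so every reachable configuration and every executable move under Pull-$k$ coincides with one under Pull-$1$.

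To assemble the global reduction, I would take the construction of Section~\ref{sec:hardness} and, where necessary, lengthen each connecting corridor by one or two empty cells so that no two boxes from distinct gadgets can ever sit in orthogonally adjacent cells either. This padding adds only a polynomial amount to the instance size, so the reduction from planar $3$-colorability remains polynomial. Combined with the local invariant, this yields a global invariant: throughout any play, at most one box is adjacent to the robot in the direction opposite its move, so the value of $k$ never matters; in particular the argument works uniformly for every $k\in\N\cup\{*\}$.

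The main obstacle I expect is the per-gadget invariant verification, especially for the branch gadget, where box~$1$ is moved several cells along row~$4$ and then box~$2$ is pulled upward into a cell previously occupied by a legal robot position. Here one must verify that no sequence of pulls can place box~$1$ directly above, below, left of, or right of box~$2$; the walls around column~$6$ and the narrowness of the pocket containing box~$2$ are what makes this go through. Once this case check is done, the remaining gadgets (one-way, $1$-to-$2$-fork, and the XOR and NAND gadgets built from branches) follow by essentially the same reasoning, and the theorem is immediate.
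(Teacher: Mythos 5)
Your proposal is correct and is essentially the paper's own argument: the paper likewise notes that the one-way gadget has a single movable box and that in the fork and branch gadgets the two boxes can never be brought into a position where the robot could pull both at once, so the extra pulling power never matters; your non-adjacency invariant is just an explicit formulation of this, and the inter-gadget padding is a harmless extra precaution. One caution: the coarse confinement claims (box~$2$ of the branch gadget staying in the bottom strip, the fork's boxes staying on opposite sides of the central column) do not by themselves rule out adjacency --- the two boxes of the branch gadget can reach a common cell in different plays, and the fork's two boxes have reachable positions that are orthogonally adjacent --- so the deferred per-gadget case analysis (showing such adjacent positions are never occupied \emph{simultaneously}) is where the real content lies.
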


\begin{proof}
  For arbitrary $k$, the one-way gadget continues to work the same,
  since it has only one movable obstacle. For the $1$-to-$2$-fork and
  the branch gadget, the two movable obstacles cannot be brought
  together in such a way that the robot can move both obstacles. Thus,
  the additional power for $k>2$ or $k=\mathrm{*}$ does not help to
  solve the puzzle.
\end{proof}

  All gadgets continue to work under PullPull conditions. All the
  necessary moves for the permitted traversals are already PullPull
  moves, and the restriction to PullPull moves clearly does not open
  new paths. We have therefore:

  \begin{theorem}
    PullPull-$k$-F is $\NP$-hard for any $k\in\N$.
  \end{theorem}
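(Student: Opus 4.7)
The plan is to reuse the reduction of the preceding theorem verbatim, observing that every gadget continues to work under PullPull semantics.  The verification has two sides, one for each direction of the equivalence between a solvable planar $3$-coloring instance and a solvable puzzle.

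For the ``no unintended behavior'' side, I would observe that any PullPull move sequence can be simulated by a Pull sequence: a single $t$-cell slide is just $t$ consecutive one-cell pulls in the same direction. Consequently, the set of configurations reachable under PullPull-$k$-F is contained in the set reachable under Pull-$k$-F, so every ``blocking'' or ``unreachability'' argument used in the previous proof (one-way property, permanent blocking of $C$ in the branch gadget, etc.) carries over unchanged, and no illegitimate gadget traversal can suddenly become available.

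For the ``intended traversals still work'' side, I would inspect each gadget in turn and check that each pull used in an intended traversal is already a valid PullPull move, i.e.\ that the robot's natural slide stops at precisely the cell demanded by the construction. The one-way gadget of Fig.~\ref{fig:onewayfork}(a) is immediate. In the branch gadget of Fig.~\ref{fig:3xor} the three critical pulls to check are: pulling box~$1$ downward (the robot slides one cell and is blocked by box~$2$, leaving box~$1$ at $(6,3)$); pulling box~$1$ three cells leftward (the robot slides from $(5,4)$ to $(2,4)$, where the wall at $x=1$ stops it, so box~$1$ lands at $(3,4)$); and pulling box~$2$ upward (blocked either by box~$1$ or by the top of the interior corridor, in each case freeing the path to~$C$ after a brief detour through the $x=7$ corridor). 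A symmetric geometric check handles the two pulls in the $1$-to-$2$-fork of Fig.~\ref{fig:onewayfork}(b).

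Finally, the extension to $k>1$ uses exactly the reasoning of the preceding theorem: in each gadget the two movable boxes can never be made to occupy positions adjacent along a common pull direction, so permitting the robot to pull several boxes simultaneously never provides additional power, and the ``PullPull'' restriction does not change this. The main obstacle is the geometric verification of the intermediate paragraph, and in particular confirming that the stopping point of the leftward slide of box~$1$ in the branch gadget matches the three-cell displacement required; once that check is done, combining the two sides with the previous theorem yields the claim.
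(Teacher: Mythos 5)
Your proposal is correct and takes essentially the same route as the paper, which likewise argues (i) that every pull used in an intended traversal is already a valid PullPull move and (ii) that restricting to PullPull moves cannot open new paths, since a slide decomposes into consecutive single-cell pulls. Your explicit coordinate checks in the branch and fork gadgets simply flesh out details the paper leaves implicit.
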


  We next turn to the simplest case without fixed boxes.

  \begin{theorem}
    Pull-$1$ is $\NP$-hard.
  \end{theorem}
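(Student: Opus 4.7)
The plan is to reduce Pull-$1$-F to Pull-$1$: given a Pull-$1$-F instance built from the gadgets of Section~\ref{sec:hardness}, I would transform it by replacing every fixed box with a movable one, padding wall strips so that every wall is at least two boxes thick (the gadget diagrams already respect this up to minor adjustments), and surrounding the whole construction with a thick outer frame of movable boxes so the robot can never leave the interior. In the forward direction, the original Pull-$1$-F solution never disturbs a wall box, so the same sequence of moves is legal in the transformed Pull-$1$ instance and solves it.

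The essential work lies in the converse. The key structural fact is that in Pull-$1$ the robot can only pull and never push, so any box that is pulled stays forever where it was pulled to. Suppose the robot pulls a wall box $W$ from a corridor cell $C$: then $C$ becomes permanently blocked, while the cell formerly occupied by $W$ becomes empty. Thanks to the two-thickness invariant, that vacated cell is adjacent only to other wall boxes, the now-blocked cell $C$, and at most one other corridor cell $C'$; from $C'$ the robot can enter a dead-end pocket that does not reconnect to any other corridor. I would establish, by induction on the number of wall-box pulls, the invariant that \emph{the robot's reachable region equals the original corridor minus a set of blocked cells plus a set of dead-end pockets}. This invariant implies that no sequence of wall-box pulls ever creates a path between two corridor cells that were disconnected in the original Pull-$1$-F instance, so any Pull-$1$ solution can be replaced by one that never pulls a wall box, which immediately gives a Pull-$1$-F solution of the original instance.

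The main obstacle is pushing the invariant step through uniformly. One must verify, gadget by gadget and (crucially) at the seams where gadgets are joined, that each wall box has the right local neighborhood so that pulling it either blocks the robot's own corridor or exposes only a shallow dead-end pocket. Wall cells that abut two corridor cells, namely inside corners and gadget interfaces, are the delicate case, since they are the natural candidates for a ``tunneling'' attempt that chains several wall-box pulls. A small uniform padding at those spots should suffice, but confirming it requires a finite but careful case analysis over the explicit gadgets of Section~\ref{sec:hardness}.
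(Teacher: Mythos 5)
Your overall strategy coincides with the paper's: keep the gadget constructions, make every wall strip at least two boxes thick so the robot cannot escape, note that the intended traversals remain valid, and then argue that pulling formerly fixed wall boxes opens no new passage. The paper settles that last point by listing the movable wall boxes of each gadget (Figure~\ref{fig:ann1}), observing that most such pulls only shrink the robot's free space or trap it, and finishing the remaining cases by an exhaustive simulation of all movements. You instead propose a general inductive invariant, but the lemma on which your induction rests is false: in a pull game a box that has been pulled does \emph{not} ``stay forever where it was pulled to''. The robot, now standing on the far side of the box, can pull it again and drag it arbitrarily far along the corridor --- indeed the intended traversal of the branch gadget pulls box $1$ three cells in succession. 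Hence the corridor cell $C$ into which a wall box is first pulled is not permanently blocked, your set of ``blocked cells'' is not monotone under further moves, and the inductive step as stated does not close. What is true is only the weaker fact that a box extracted from a two-thick wall can never re-enter the wall (the cell behind its original position is occupied), so extracted boxes merely wander inside the corridor region; this makes the dead-end-pocket picture plausible but does not by itself exclude chaining pulls at inside corners and gadget seams to tunnel between corridor segments.

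Consequently the substance of your proof collapses onto exactly the finite case analysis you defer to the last paragraph: checking, for every wall cell adjacent to two corridor cells, that no pull or chain of pulls there connects two previously separated corridor components. That is precisely the part the paper also cannot dispatch by a clean argument and instead verifies by simulating all possible movements. Your proposal is therefore the same reduction carrying the same verification burden; the invariant, in its present form, does not discharge it and would have to be reformulated (for instance as ``the robot's reachable set never meets a corridor component it could not reach initially'', proved by a local analysis of each corner and interface cell) before it could replace the simulation.
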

  \begin{proof}
    We first observe that the robot moves only inside the gadgets. We
    can make sure that it cannot escape from there, even if all
    obstacles are movable, making the surrounding walls thicker than
    the interior space. Figure~\ref{fig:ann1} shows for each gadget
    all movable blocks when entering from any point.

    Valid passages from the cases with fixed obstacles stay valid, so
    we need only to show that moving boxes that were previously
    unmovable do not open new passages.
  
    It is easy to verify that a large number of the moves of such
    boxes only reduce the available space for the robot, or even
    capture him in a hole. For example, when entering the one-way
    gadget from $B$, there are only two movable boxes at the first
    corner. Moving one of these locks the robot in a part of the
    corridor, but does not open a passage to $A$. Similarly, it is not
    difficult to verify that we cannot pass the fork gadgets from $B$
    or $C$, or the branch gadget coming from $C$. It remains to verify
    that after entering the fork gadget from $A$ we can leave either
    at $B$ or $C$, but not both, and that after entering the branch
    gadget from $A$ we can exit only at $B$, and we can do this only
    by blocking $C$ permanently. This has been done by a simulation of
    all possible movements.
\end{proof}

\def\mboxscale{1}
\def\mmboxscale{0.27}

\begin{figure}
  \centering
  \scalebox{\mboxscale}{%
\begin{tikzpicture}[scale=\mmboxscale]
\draw[gray] (0,0) grid (10,9);
\filldraw[fill=blue!50,draw=black] (0,8) rectangle (1,9);
\filldraw[fill=blue!50,draw=black] (1,8) rectangle (2,9);
\filldraw[fill=blue!50,draw=black] (2,8) rectangle (3,9);
\filldraw[fill=blue!50,draw=black] (3,8) rectangle (4,9);
\filldraw[fill=blue!50,draw=black] (4,8) rectangle (5,9);
\filldraw[fill=blue!50,draw=black] (5,8) rectangle (6,9);
\filldraw[fill=blue!50,draw=black] (6,8) rectangle (7,9);
\filldraw[fill=blue!50,draw=black] (7,8) rectangle (8,9);
\filldraw[fill=blue!50,draw=black] (8,8) rectangle (9,9);
\filldraw[fill=blue!50,draw=black] (9,8) rectangle (10,9);
\filldraw[fill=blue!50,draw=black] (0,7) rectangle (1,8);
\filldraw[fill=blue!50,draw=black] (1,7) rectangle (2,8);
\filldraw[fill=blue!50,draw=black] (2,7) rectangle (3,8);
\filldraw[fill=blue!50,draw=black] (3,7) rectangle (4,8);
\filldraw[fill=blue!50,draw=black] (4,7) rectangle (5,8);
\filldraw[fill=blue!50,draw=black] (5,7) rectangle (6,8);
\filldraw[fill=blue!50,draw=black] (6,7) rectangle (7,8);
\filldraw[fill=blue!50,draw=black] (7,7) rectangle (8,8);
\filldraw[fill=blue!50,draw=black] (8,7) rectangle (9,8);
\filldraw[fill=blue!50,draw=black] (9,7) rectangle (10,8);
\filldraw[fill=blue!50,draw=black] (0,6) rectangle (1,7);
\filldraw[fill=blue!50,draw=black] (1,6) rectangle (2,7);
\filldraw[fill=blue!50,draw=black] (2,6) rectangle (3,7);
\filldraw[fill=blue!50,draw=black] (3,6) rectangle (4,7);
\filldraw[fill=blue!50,draw=black] (4,6) rectangle (5,7);
\filldraw[fill=blue!50,draw=black] (5,6) rectangle (6,7);
\filldraw[fill=blue!50,draw=black] (0,5) rectangle (1,6);
\filldraw[fill=blue!50,draw=black] (1,5) rectangle (2,6);
\filldraw[fill=blue!50,draw=black] (2,5) rectangle (3,6);
\filldraw[fill=blue!50,draw=black] (3,5) rectangle (4,6);
\filldraw[fill=blue!50,draw=black] (4,5) rectangle (5,6);
\filldraw[fill=blue!50,draw=black] (5,5) rectangle (6,6);
\filldraw[fill=blue!50,draw=black] (7,5) rectangle (8,6);
\filldraw[fill=blue!50,draw=black] (8,5) rectangle (9,6);
\filldraw[fill=blue!50,draw=black] (9,5) rectangle (10,6);
\filldraw[fill=blue!50,draw=black] (0,4) rectangle (1,5);
\filldraw[fill=blue!50,draw=black] (1,4) rectangle (2,5);
\filldraw[fill=blue!50,draw=black] (2,4) rectangle (3,5);
\filldraw[fill=blue!50,draw=black] (3,4) rectangle (4,5);
\filldraw[fill=blue!50,draw=black] (4,4) rectangle (5,5);
\filldraw[fill=blue!50,draw=black] (5,4) rectangle (6,5);
\filldraw[fill=blue!50,draw=black] (7,4) rectangle (8,5);
\filldraw[fill=blue!50,draw=black] (8,4) rectangle (9,5);
\filldraw[fill=blue!50,draw=black] (9,4) rectangle (10,5);
\filldraw[fill=blue!50,draw=black] (0,3) rectangle (1,4);
\filldraw[fill=blue!50,draw=black] (1,3) rectangle (2,4);
\filldraw[fill=blue!50,draw=black] (5,3) rectangle (6,4);
\filldraw[fill=blue!50,draw=black] (8,3) rectangle (9,4);
\filldraw[fill=blue!50,draw=black] (9,3) rectangle (10,4);
\filldraw[fill=blue!50,draw=black] (0,2) rectangle (1,3);
\filldraw[fill=blue!50,draw=black] (1,2) rectangle (2,3);
\filldraw[fill=blue!50,draw=black] (3,2) rectangle (4,3);
\filldraw[fill=blue!50,draw=black] (4,2) rectangle (5,3);
\filldraw[fill=blue!50,draw=black] (8,2) rectangle (9,3);
\filldraw[fill=blue!50,draw=black] (9,2) rectangle (10,3);
\filldraw[fill=blue!50,draw=black] (0,1) rectangle (1,2);
\filldraw[fill=blue!50,draw=black] (1,1) rectangle (2,2);
\filldraw[fill=blue!50,draw=black] (3,1) rectangle (4,2);
\filldraw[fill=blue!50,draw=black] (4,1) rectangle (5,2);
\filldraw[fill=blue!50,draw=black] (5,1) rectangle (6,2);
\filldraw[fill=blue!50,draw=black] (6,1) rectangle (7,2);
\filldraw[fill=blue!50,draw=black] (7,1) rectangle (8,2);
\filldraw[fill=blue!50,draw=black] (8,1) rectangle (9,2);
\filldraw[fill=blue!50,draw=black] (9,1) rectangle (10,2);
\filldraw[fill=blue!50,draw=black] (0,0) rectangle (1,1);
\filldraw[fill=blue!50,draw=black] (1,0) rectangle (2,1);
\filldraw[fill=blue!50,draw=black] (3,0) rectangle (4,1);
\filldraw[fill=blue!50,draw=black] (4,0) rectangle (5,1);
\filldraw[fill=blue!50,draw=black] (5,0) rectangle (6,1);
\filldraw[fill=blue!50,draw=black] (6,0) rectangle (7,1);
\filldraw[fill=blue!50,draw=black] (7,0) rectangle (8,1);
\filldraw[fill=blue!50,draw=black] (8,0) rectangle (9,1);
\filldraw[fill=blue!50,draw=black] (9,0) rectangle (10,1);
\iftrees
\draw[*-,red,ultra thick] (6.25,7.5) -- (6.25,6.5) -- (6.25,5.5) -- (6.25,4.5) -- (6.25,3.75);
\draw[*-,red,ultra thick] (5.5,6.5) -- (6.5,6.5) -- (7.5,6.5) -- (8.5,6.5);
\draw[*-,red,ultra thick] (2.5,4.5) -- (2.5,3.5) -- (2.5,2.5) -- (2.5,1.5);
WARNING: id 41 is no simple path!
\draw[*-,red,ultra thick] (5.4,4.5) -- (6.15,4.5) -- (5.5,4.5) -- (5.5,3.75);
\draw[*-,red,ultra thick] (7.5,4.5) -- (7.5,3.75);
\draw[*-,red,ultra thick] (8.5,4.5) -- (7.75,4.5);
\draw[*-,red,ultra thick] (1.5,3.75) -- (2.5,3.75) -- (3.5,3.75) -- (4.5,3.75);
WARNING: id 47 is no simple path!
\draw[*-,red,ultra thick] (5.4,3.5) -- (6.25,3.5) -- (5.5,3.5) -- (3.5,3.5);
\draw[*-,red,ultra thick] (8.5,3.5) -- (7.75,3.5);
\draw[*-,red,ultra thick] (4.5,2.75) -- (5.5,2.75) -- (6.5,2.75) -- (6.5,3.5) -- (6.5,4.5) -- (6.5,5.5);
\draw[*-,red,ultra thick] (8.5,2.5) -- (7.5,2.5) -- (6.5,2.5);
\draw[*-,red,ultra thick] (5.5,1.5) -- (5.5,2.5);
\draw[*-,red,ultra thick] (6.75,1.5) -- (6.75,2.5) -- (6.75,3.5) -- (6.75,4.5) -- (6.75,5.5);
\draw[*-,red,ultra thick] (7.5,1.5) -- (7.5,2.25) -- (6.5,2.25);
\else
\node[circle,fill=red,inner sep=1.5pt] at (6.5,7.5) {};
\node[circle,fill=red,inner sep=1.5pt] at (5.5,6.5)  {};
\node[circle,fill=red,inner sep=1.5pt] at (2.5,4.5)  {};
\node[circle,fill=red,inner sep=1.5pt] at (5.4,4.5)  {};
\node[circle,fill=red,inner sep=1.5pt] at (7.5,4.5)  {};
\node[circle,fill=red,inner sep=1.5pt] at (8.5,4.5)  {};
\node[circle,fill=red,inner sep=1.5pt] at (1.5,3.5) {};
\node[circle,fill=red,inner sep=1.5pt] at (5.4,3.5)  {};
\node[circle,fill=red,inner sep=1.5pt] at (8.5,3.5)  {};
\node[circle,fill=red,inner sep=1.5pt] at (4.5,2.5) {};
\node[circle,fill=red,inner sep=1.5pt] at (8.5,2.5)  {};
\node[circle,fill=red,inner sep=1.5pt] at (5.5,1.5)  {};
\node[circle,fill=red,inner sep=1.5pt] at (6.5,1.5) {};
\node[circle,fill=red,inner sep=1.5pt] at (7.5,1.5)  {};
\fi
\node at (2.5,0.5) { $B$ };
\node at (9.5,6.5) { $A$ };
\end{tikzpicture}
}\qquad
\scalebox{\mboxscale}{
\begin{tikzpicture}[scale=\mmboxscale]
\draw[gray] (0,0) grid (12,9);
\filldraw[fill=blue!50,draw=black] (0,8) rectangle (1,9);
\filldraw[fill=blue!50,draw=black] (1,8) rectangle (2,9);
\filldraw[fill=blue!50,draw=black] (3,8) rectangle (4,9);
\filldraw[fill=blue!50,draw=black] (4,8) rectangle (5,9);
\filldraw[fill=blue!50,draw=black] (5,8) rectangle (6,9);
\filldraw[fill=blue!50,draw=black] (6,8) rectangle (7,9);
\filldraw[fill=blue!50,draw=black] (7,8) rectangle (8,9);
\filldraw[fill=blue!50,draw=black] (8,8) rectangle (9,9);
\filldraw[fill=blue!50,draw=black] (10,8) rectangle (11,9);
\filldraw[fill=blue!50,draw=black] (11,8) rectangle (12,9);
\filldraw[fill=blue!50,draw=black] (0,7) rectangle (1,8);
\filldraw[fill=blue!50,draw=black] (1,7) rectangle (2,8);
\filldraw[fill=blue!50,draw=black] (3,7) rectangle (4,8);
\filldraw[fill=blue!50,draw=black] (4,7) rectangle (5,8);
\filldraw[fill=blue!50,draw=black] (5,7) rectangle (6,8);
\filldraw[fill=blue!50,draw=black] (6,7) rectangle (7,8);
\filldraw[fill=blue!50,draw=black] (7,7) rectangle (8,8);
\filldraw[fill=blue!50,draw=black] (8,7) rectangle (9,8);
\filldraw[fill=blue!50,draw=black] (10,7) rectangle (11,8);
\filldraw[fill=blue!50,draw=black] (11,7) rectangle (12,8);
\filldraw[fill=blue!50,draw=black] (0,6) rectangle (1,7);
\filldraw[fill=blue!50,draw=black] (1,6) rectangle (2,7);
\filldraw[fill=blue!50,draw=black] (3,6) rectangle (4,7);
\filldraw[fill=blue!50,draw=black] (4,6) rectangle (5,7);
\filldraw[fill=blue!50,draw=black] (5,6) rectangle (6,7);
\filldraw[fill=blue!50,draw=black] (6,6) rectangle (7,7);
\filldraw[fill=blue!50,draw=black] (7,6) rectangle (8,7);
\filldraw[fill=blue!50,draw=black] (8,6) rectangle (9,7);
\filldraw[fill=blue!50,draw=black] (10,6) rectangle (11,7);
\filldraw[fill=blue!50,draw=black] (11,6) rectangle (12,7);
\filldraw[fill=blue!50,draw=black] (0,5) rectangle (1,6);
\filldraw[fill=blue!50,draw=black] (1,5) rectangle (2,6);
\filldraw[fill=blue!50,draw=black] (3,5) rectangle (4,6);
\filldraw[fill=blue!50,draw=black] (4,5) rectangle (5,6);
\filldraw[fill=blue!50,draw=black] (7,5) rectangle (8,6);
\filldraw[fill=blue!50,draw=black] (8,5) rectangle (9,6);
\filldraw[fill=blue!50,draw=black] (10,5) rectangle (11,6);
\filldraw[fill=blue!50,draw=black] (11,5) rectangle (12,6);
\filldraw[fill=blue!50,draw=black] (0,4) rectangle (1,5);
\filldraw[fill=blue!50,draw=black] (1,4) rectangle (2,5);
\filldraw[fill=blue!50,draw=black] (3,4) rectangle (4,5);
\filldraw[fill=blue!50,draw=black] (4,4) rectangle (5,5);
\filldraw[fill=blue!50,draw=black] (7,4) rectangle (8,5);
\filldraw[fill=blue!50,draw=black] (8,4) rectangle (9,5);
\filldraw[fill=blue!50,draw=black] (10,4) rectangle (11,5);
\filldraw[fill=blue!50,draw=black] (11,4) rectangle (12,5);
\filldraw[fill=blue!50,draw=black] (0,3) rectangle (1,4);
\filldraw[fill=blue!50,draw=black] (1,3) rectangle (2,4);
\filldraw[fill=blue!50,draw=black] (4,3) rectangle (5,4);
\filldraw[fill=blue!50,draw=black] (7,3) rectangle (8,4);
\filldraw[fill=blue!50,draw=black] (10,3) rectangle (11,4);
\filldraw[fill=blue!50,draw=black] (11,3) rectangle (12,4);
\filldraw[fill=blue!50,draw=black] (0,2) rectangle (1,3);
\filldraw[fill=blue!50,draw=black] (1,2) rectangle (2,3);
\filldraw[fill=blue!50,draw=black] (2,2) rectangle (3,3);
\filldraw[fill=blue!50,draw=black] (3,2) rectangle (4,3);
\filldraw[fill=blue!50,draw=black] (4,2) rectangle (5,3);
\filldraw[fill=blue!50,draw=black] (6,2) rectangle (7,3);
\filldraw[fill=blue!50,draw=black] (7,2) rectangle (8,3);
\filldraw[fill=blue!50,draw=black] (8,2) rectangle (9,3);
\filldraw[fill=blue!50,draw=black] (9,2) rectangle (10,3);
\filldraw[fill=blue!50,draw=black] (10,2) rectangle (11,3);
\filldraw[fill=blue!50,draw=black] (11,2) rectangle (12,3);
\filldraw[fill=blue!50,draw=black] (0,1) rectangle (1,2);
\filldraw[fill=blue!50,draw=black] (1,1) rectangle (2,2);
\filldraw[fill=blue!50,draw=black] (2,1) rectangle (3,2);
\filldraw[fill=blue!50,draw=black] (3,1) rectangle (4,2);
\filldraw[fill=blue!50,draw=black] (4,1) rectangle (5,2);
\filldraw[fill=blue!50,draw=black] (6,1) rectangle (7,2);
\filldraw[fill=blue!50,draw=black] (7,1) rectangle (8,2);
\filldraw[fill=blue!50,draw=black] (8,1) rectangle (9,2);
\filldraw[fill=blue!50,draw=black] (9,1) rectangle (10,2);
\filldraw[fill=blue!50,draw=black] (10,1) rectangle (11,2);
\filldraw[fill=blue!50,draw=black] (11,1) rectangle (12,2);
\filldraw[fill=blue!50,draw=black] (0,0) rectangle (1,1);
\filldraw[fill=blue!50,draw=black] (1,0) rectangle (2,1);
\filldraw[fill=blue!50,draw=black] (2,0) rectangle (3,1);
\filldraw[fill=blue!50,draw=black] (3,0) rectangle (4,1);
\filldraw[fill=blue!50,draw=black] (4,0) rectangle (5,1);
\filldraw[fill=blue!50,draw=black] (6,0) rectangle (7,1);
\filldraw[fill=blue!50,draw=black] (7,0) rectangle (8,1);
\filldraw[fill=blue!50,draw=black] (8,0) rectangle (9,1);
\filldraw[fill=blue!50,draw=black] (9,0) rectangle (10,1);
\filldraw[fill=blue!50,draw=black] (10,0) rectangle (11,1);
\filldraw[fill=blue!50,draw=black] (11,0) rectangle (12,1);
\iftrees
\draw[*-,red,very thick] (5.5,7.5) -- (5.5,6.75);
\draw[*-,red,very thick] (6.5,7.5) -- (6.5,6.75);
\draw[*-,red,very thick] (5.5,6.5) -- (5.5,5.5) -- (5.5,4.5) -- (5.5,3.5) -- (5.5,2.5) -- (5.5,1.75);
\draw[*-,red,very thick] (6.5,6.5) -- (6.5,5.5) -- (6.5,4.75);
\draw[*-,red,very thick] (3.5,5.75) -- (4.5,5.75) -- (5.25,5.75);
\draw[*-,red,very thick] (4.5,5.5) -- (5.5,5.5) -- (6.25,5.5);
\draw[*-,red,very thick] (7.5,5.25) -- (6.75,5.25) -- (6.75,4.75);
\draw[*-,red,very thick] (8.5,5.5) -- (7.5,5.5) -- (6.75,5.5);
\draw[*-,red,very thick] (4.5,4.5) -- (5.25,4.5) -- (5.25,3.5) -- (5.25,2.5) -- (5.25,1.75);
\draw[*-,red,very thick] (7.5,4.5) -- (6.75,4.5);
\draw[*-,red,very thick] (1.5,3.75) -- (2.5,3.75) -- (3.5,3.75) -- (4.5,3.75) -- (5.15,3.75);
\draw[*-,red,very thick] (4.5,3.5) -- (5.5,3.5) -- (6.5,3.5) -- (7.5,3.5) -- (8.25,3.5);
\draw[-,red,very thick] (4.5,3.5) -- (3.5,3.5); 
\draw[*-,red,very thick] (7.5,3.25) -- (6.5,3.25) -- (5.5,3.25) -- (4.5,3.25) -- (3.75,3.25);
\draw[-,red,very thick] (7.5,3.25) -- (8.5,3.25); 
\draw[*-,red,very thick] (10.5,3.75) -- (9.5,3.75) -- (8.5,3.75) -- (7.5,3.75) -- (6.75,3.75);
\draw[*-,red,very thick] (2.5,2.5) -- (2.5,3.5) -- (2.5,4.5) -- (2.5,5.5) -- (2.5,6.5) -- (2.5,7.25);
\draw[*-,red,very thick] (4.5,2.5) -- (5.75,2.5) -- (5.75,3.5) -- (5.75,4.25);
\draw[*-,red,very thick] (6.5,2.5) -- (6.5,3.5) -- (6.5,4.25);
\draw[*-,red,very thick] (7.5,2.5) -- (6.75,2.5);
\draw[*-,red,very thick] (9.5,2.5) -- (9.5,3.5) -- (9.5,4.5) -- (9.5,5.5) -- (9.5,6.5) -- (9.5,7.25);
\draw[*-,red,very thick] (4.5,1.5) -- (5.25,1.5);
\draw[*-,red,very thick] (6.5,1.5) -- (6.5,2.25);
\draw[*-,red,very thick] (7.5,1.5) -- (6.75,1.5);
\else
\node[circle,fill=red,inner sep=1.5pt] at (5.5,7.5)  {};
\node[circle,fill=red,inner sep=1.5pt] at (6.5,7.5)  {};
\node[circle,fill=red,inner sep=1.5pt] at (5.5,6.5)  {};
\node[circle,fill=red,inner sep=1.5pt] at (6.5,6.5)  {};
\node[circle,fill=red,inner sep=1.5pt] at (3.5,5.5) {};
\node[circle,fill=red,inner sep=1.5pt] at (4.5,5.5)  {};
\node[circle,fill=red,inner sep=1.5pt] at (7.5,5.5) {};
\node[circle,fill=red,inner sep=1.5pt] at (8.5,5.5)  {};
\node[circle,fill=red,inner sep=1.5pt] at (4.5,4.5)  {};
\node[circle,fill=red,inner sep=1.5pt] at (7.5,4.5)  {};
\node[circle,fill=red,inner sep=1.5pt] at (1.5,3.5) {};
\node[circle,fill=red,inner sep=1.5pt] at (4.5,3.5)  {};
\node[circle,fill=red,inner sep=1.5pt] at (7.5,3.5) {};
\node[circle,fill=red,inner sep=1.5pt] at (10.5,3.5){};
\node[circle,fill=red,inner sep=1.5pt] at (2.5,2.5)  {};
\node[circle,fill=red,inner sep=1.5pt] at (4.5,2.5)  {};
\node[circle,fill=red,inner sep=1.5pt] at (6.5,2.5)  {};
\node[circle,fill=red,inner sep=1.5pt] at (7.5,2.5)  {};
\node[circle,fill=red,inner sep=1.5pt] at (9.5,2.5)  {};
\node[circle,fill=red,inner sep=1.5pt] at (4.5,1.5)  {};
\node[circle,fill=red,inner sep=1.5pt] at (6.5,1.5)  {};
\node[circle,fill=red,inner sep=1.5pt] at (7.5,1.5)  {};
\fi
\node at (5.5,2.5) { $A$ };
\node at (2.5,8.5) { $B$ };
\node at (9.5,8.5) { $C$ };
\end{tikzpicture}
}

\medskip
\scalebox{\mboxscale}{
\begin{tikzpicture}[scale=\mmboxscale]
\draw[gray] (0,0) grid (12,9);
\filldraw[fill=blue!50,draw=black] (0,8) rectangle (1,9);
\filldraw[fill=blue!50,draw=black] (1,8) rectangle (2,9);
\filldraw[fill=blue!50,draw=black] (2,8) rectangle (3,9);
\filldraw[fill=blue!50,draw=black] (3,8) rectangle (4,9);
\filldraw[fill=blue!50,draw=black] (4,8) rectangle (5,9);
\filldraw[fill=blue!50,draw=black] (5,8) rectangle (6,9);
\filldraw[fill=blue!50,draw=black] (6,8) rectangle (7,9);
\filldraw[fill=blue!50,draw=black] (7,8) rectangle (8,9);
\filldraw[fill=blue!50,draw=black] (8,8) rectangle (9,9);
\filldraw[fill=blue!50,draw=black] (9,8) rectangle (10,9);
\filldraw[fill=blue!50,draw=black] (10,8) rectangle (11,9);
\filldraw[fill=blue!50,draw=black] (11,8) rectangle (12,9);
\filldraw[fill=blue!50,draw=black] (0,7) rectangle (1,8);
\filldraw[fill=blue!50,draw=black] (1,7) rectangle (2,8);
\filldraw[fill=blue!50,draw=black] (2,7) rectangle (3,8);
\filldraw[fill=blue!50,draw=black] (3,7) rectangle (4,8);
\filldraw[fill=blue!50,draw=black] (4,7) rectangle (5,8);
\filldraw[fill=blue!50,draw=black] (5,7) rectangle (6,8);
\filldraw[fill=blue!50,draw=black] (6,7) rectangle (7,8);
\filldraw[fill=blue!50,draw=black] (7,7) rectangle (8,8);
\filldraw[fill=blue!50,draw=black] (8,7) rectangle (9,8);
\filldraw[fill=blue!50,draw=black] (9,7) rectangle (10,8);
\filldraw[fill=blue!50,draw=black] (10,7) rectangle (11,8);
\filldraw[fill=blue!50,draw=black] (11,7) rectangle (12,8);
\filldraw[fill=blue!50,draw=black] (0,6) rectangle (1,7);
\filldraw[fill=blue!50,draw=black] (1,6) rectangle (2,7);
\filldraw[fill=blue!50,draw=black] (2,6) rectangle (3,7);
\filldraw[fill=blue!50,draw=black] (3,6) rectangle (4,7);
\filldraw[fill=blue!50,draw=black] (4,6) rectangle (5,7);
\filldraw[fill=blue!50,draw=black] (5,6) rectangle (6,7);
\filldraw[fill=blue!50,draw=black] (6,6) rectangle (7,7);
\filldraw[fill=blue!50,draw=black] (7,6) rectangle (8,7);
\filldraw[fill=blue!50,draw=black] (8,6) rectangle (9,7);
\filldraw[fill=blue!50,draw=black] (9,6) rectangle (10,7);
\filldraw[fill=blue!50,draw=black] (10,6) rectangle (11,7);
\filldraw[fill=blue!50,draw=black] (11,6) rectangle (12,7);
\filldraw[fill=blue!50,draw=black] (6,5) rectangle (7,6);
\filldraw[fill=blue!50,draw=black] (7,5) rectangle (8,6);
\filldraw[fill=blue!50,draw=black] (8,5) rectangle (9,6);
\filldraw[fill=blue!50,draw=black] (9,5) rectangle (10,6);
\filldraw[fill=blue!50,draw=black] (10,5) rectangle (11,6);
\filldraw[fill=blue!50,draw=black] (11,5) rectangle (12,6);
\filldraw[fill=blue!50,draw=black] (0,4) rectangle (1,5);
\filldraw[fill=blue!50,draw=black] (1,4) rectangle (2,5);
\filldraw[fill=blue!50,draw=black] (2,4) rectangle (3,5);
\filldraw[fill=blue!50,draw=black] (7,4) rectangle (8,5);
\filldraw[fill=blue!50,draw=black] (9,4) rectangle (10,5);
\filldraw[fill=blue!50,draw=black] (10,4) rectangle (11,5);
\filldraw[fill=blue!50,draw=black] (11,4) rectangle (12,5);
\filldraw[fill=blue!50,draw=black] (0,3) rectangle (1,4);
\filldraw[fill=blue!50,draw=black] (1,3) rectangle (2,4);
\filldraw[fill=blue!50,draw=black] (2,3) rectangle (3,4);
\filldraw[fill=blue!50,draw=black] (3,3) rectangle (4,4);
\filldraw[fill=blue!50,draw=black] (4,3) rectangle (5,4);
\filldraw[fill=blue!50,draw=black] (5,3) rectangle (6,4);
\filldraw[fill=blue!50,draw=black] (6,3) rectangle (7,4);
\filldraw[fill=blue!50,draw=black] (9,3) rectangle (10,4);
\filldraw[fill=blue!50,draw=black] (10,3) rectangle (11,4);
\filldraw[fill=blue!50,draw=black] (11,3) rectangle (12,4);
\filldraw[fill=blue!50,draw=black] (0,2) rectangle (1,3);
\filldraw[fill=blue!50,draw=black] (1,2) rectangle (2,3);
\filldraw[fill=blue!50,draw=black] (2,2) rectangle (3,3);
\filldraw[fill=blue!50,draw=black] (3,2) rectangle (4,3);
\filldraw[fill=blue!50,draw=black] (4,2) rectangle (5,3);
\filldraw[fill=blue!50,draw=black] (5,2) rectangle (6,3);
\filldraw[fill=blue!50,draw=black] (6,2) rectangle (7,3);
\filldraw[fill=blue!50,draw=black] (0,1) rectangle (1,2);
\filldraw[fill=blue!50,draw=black] (1,1) rectangle (2,2);
\filldraw[fill=blue!50,draw=black] (2,1) rectangle (3,2);
\filldraw[fill=blue!50,draw=black] (3,1) rectangle (4,2);
\filldraw[fill=blue!50,draw=black] (4,1) rectangle (5,2);
\filldraw[fill=blue!50,draw=black] (5,1) rectangle (6,2);
\filldraw[fill=blue!50,draw=black] (6,1) rectangle (7,2);
\filldraw[fill=blue!50,draw=black] (7,1) rectangle (8,2);
\filldraw[fill=blue!50,draw=black] (8,1) rectangle (9,2);
\filldraw[fill=blue!50,draw=black] (9,1) rectangle (10,2);
\filldraw[fill=blue!50,draw=black] (10,1) rectangle (11,2);
\filldraw[fill=blue!50,draw=black] (11,1) rectangle (12,2);
\filldraw[fill=blue!50,draw=black] (0,0) rectangle (1,1);
\filldraw[fill=blue!50,draw=black] (1,0) rectangle (2,1);
\filldraw[fill=blue!50,draw=black] (2,0) rectangle (3,1);
\filldraw[fill=blue!50,draw=black] (3,0) rectangle (4,1);
\filldraw[fill=blue!50,draw=black] (4,0) rectangle (5,1);
\filldraw[fill=blue!50,draw=black] (5,0) rectangle (6,1);
\filldraw[fill=blue!50,draw=black] (6,0) rectangle (7,1);
\filldraw[fill=blue!50,draw=black] (8,0) rectangle (9,1);
\filldraw[fill=blue!50,draw=black] (9,0) rectangle (10,1);
\filldraw[fill=blue!50,draw=black] (10,0) rectangle (11,1);
\filldraw[fill=blue!50,draw=black] (11,0) rectangle (12,1);
\iftrees
\draw[*-,red,very thick] (3.5,7.5) -- (3.5,6.5) -- (3.5,5.75);
\draw[*-,red,very thick] (5.75,7.75) -- (5.75,6.75) -- (5.75,6);
\draw[*-,red,very thick] (1.5,6.5) -- (1.5,5.75);
\draw[*-,red,very thick] (2.5,6.5) -- (2.5,5.5) -- (3.5,5.5) -- (4.25,5.5);
\draw[*-,red,very thick] (3.75,6.75) -- (3.75,5.75) -- (4.5,5.75);
\draw[*-,red,very thick] (4.25,6.25) -- (4.25,5.25) -- (3.25,5.25) -- (2.25,5.25) -- (1.5,5.25);
\draw[*-,red,very thick] (5.5,6.5) -- (5.5,5.5) -- (5.5,4.75);
\draw[*-,red,very thick] (6.5,6.5) -- (6.5,5.75);
\draw[*-,red,very thick] (7.5,6.5) -- (7.5,5.75);
\draw[*-,red,very thick] (6.5,5.5) -- (5.5,5.5) -- (4.5,5.5) -- (3.5,5.5) -- (2.5,5.5) -- (1.75,5.5);
\draw[*-,red,very thick] (7.5,5.5) -- (6.75,5.5);
\draw[*-,red,very thick] (8.75,5.75) -- (8.75,4.75) -- (8.75,4);
\draw[*-,red,very thick] (1.5,4.5) -- (2.25,4.5);
\draw[*-,red,very thick] (2.5,4.5) -- (3.5,4.5) -- (4.5,4.5) -- (5.5,4.5) -- (6.5,4.5) -- (7.25,4.5);
\draw[*-,red,very thick] (7.75,4.75) -- (7.75,3.75) -- (7.75,4.5);
\draw[*-,red,very thick] (9.5,4.5) -- (9.5,3.75);
\draw[-,red,very thick] (9.5,4.5) -- (8.5,4.5) -- (7.5,4.5) -- (6.5,4.5) -- (5.5,4.5) -- (4.75,4.5);
\draw[*-,red,very thick] (10.25,4.25) -- (9.25,4.25) -- (8.25,4.25) -- (7.25,4.25) -- (6.5,4.25);
\draw[*-,red,very thick] (1.5,3.5) -- (1.5,4.25);
\draw[*-,red,very thick] (2.5,3.5) -- (2.5,4.25);
\draw[*-,red,very thick] (3.5,3.5) -- (3.5,4.5) -- (4.5,4.5) -- (5.5,4.5) -- (6.5,4.5) -- (7.25,4.5);
\draw[*-,red,very thick] (4.75,3.75) -- (4.75,4.75) -- (5.75,4.75) -- (6.75,4.75) -- (7.5,4.75);
\draw[*-,red,very thick] (5.25,3.25) -- (5.25,4.25) -- (6.25,4.25) -- (7,4.25);
\draw[*-,red,very thick] (6.5,3.5) -- (7.25,3.5);
\draw[-,red,very thick] (6.5,3.5) -- (6.5,4.5) -- (5.5,4.5) -- (4.75,4.5);
\draw[*-,red,very thick] (9.5,3.5) -- (8.75,3.5);
\draw[*-,red,very thick] (3.5,2.5) -- (3.5,3.5) -- (3.5,4.25);
\draw[*-,red,very thick] (5.75,2.75) -- (5.75,3.75) -- (5.75,4.5);
\draw[*-,red,very thick] (6.25,2.25) -- (7.25,2.25) -- (8.25,2.25) -- (9.25,2.25) -- (10,2.25);
\draw[*-,red,very thick] (7.5,1.5) -- (7.5,2.5) -- (8.5,2.5) -- (9.5,2.5) -- (10.25,2.5);
\draw[*-,red,very thick] (8.75,1.75) -- (8.75,2.75) -- (8.75,3.5);
\draw[*-,red,very thick] (9.5,1.5) -- (9.5,2.25);
\else
\node[circle,fill=red,inner sep=1.5pt] at (3.5,7.5) {} ;
\node[circle,fill=red,inner sep=1.5pt] at (5.5,7.5) {} ;
\node[circle,fill=red,inner sep=1.5pt] at (1.5,6.5) {} ;
\node[circle,fill=red,inner sep=1.5pt] at (2.5,6.5) {} ;
\node[circle,fill=red,inner sep=1.5pt] at (3.5,6.5) {} ;
\node[circle,fill=red,inner sep=1.5pt] at (4.5,6.5) {} ;
\node[circle,fill=red,inner sep=1.5pt] at (5.5,6.5) {} ;
\node[circle,fill=red,inner sep=1.5pt] at (6.5,6.5) {} ;
\node[circle,fill=red,inner sep=1.5pt] at (7.5,6.5) {} ;
\node[circle,fill=red,inner sep=1.5pt] at (6.5,5.5) {} ;
\node[circle,fill=red,inner sep=1.5pt] at (7.5,5.5) {} ;
\node[circle,fill=red,inner sep=1.5pt] at (8.5,5.5) {} ;
\node[circle,fill=red,inner sep=1.5pt] at (1.5,4.5) {} ;
\node[circle,fill=red,inner sep=1.5pt] at (2.5,4.5) {} ;
\node[circle,fill=red,inner sep=1.5pt] at (7.5,4.5) {} ;
\node[circle,fill=red,inner sep=1.5pt] at (9.5,4.5) {} ;
\node[circle,fill=red,inner sep=1.5pt] at (9.5,4.5) {} ;
\node[circle,fill=red,inner sep=1.5pt] at (10.5,4.5) {} ;
\node[circle,fill=red,inner sep=1.5pt] at (1.5,3.5) {} ;
\node[circle,fill=red,inner sep=1.5pt] at (2.5,3.5) {} ;
\node[circle,fill=red,inner sep=1.5pt] at (3.5,3.5) {} ;
\node[circle,fill=red,inner sep=1.5pt] at (4.5,3.5) {} ;
\node[circle,fill=red,inner sep=1.5pt] at (5.5,3.5) {} ;
\node[circle,fill=red,inner sep=1.5pt] at (6.5,3.5) {} ;
\node[circle,fill=red,inner sep=1.5pt] at (6.5,3.5) {} ;
\node[circle,fill=red,inner sep=1.5pt] at (9.5,3.5) {} ;
\node[circle,fill=red,inner sep=1.5pt] at (3.5,2.5) {} ;
\node[circle,fill=red,inner sep=1.5pt] at (5.5,2.5) {} ;
\node[circle,fill=red,inner sep=1.5pt] at (6.5,2.5) {} ;
\node[circle,fill=red,inner sep=1.5pt] at (7.5,1.5) {} ;
\node[circle,fill=red,inner sep=1.5pt] at (8.5,1.5) {} ;
\node[circle,fill=red,inner sep=1.5pt] at (9.5,1.5) {} ;
\fi
\node at (11.5,2.5) { $A$ };
\node at (0.5,5.5) { $B$ };
\node at (7.5,0.5) { $C$ };
\end{tikzpicture}
}

\caption{\small Movable blocks (marked by a red dot) in the three elementary gadgets.}
  \label{fig:ann1}
\end{figure}
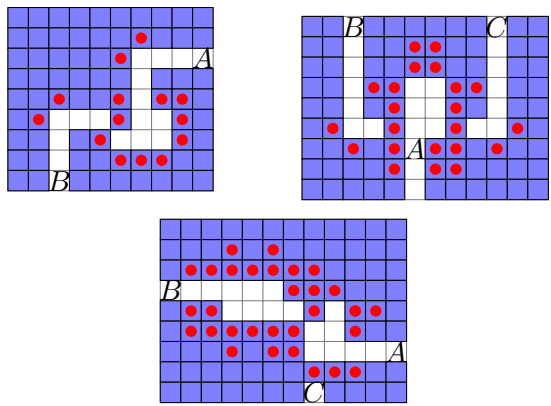

\section{Open questions}

We have investigated some variants of pull-like block moving games,
but left open the complexity of Pull-$k$ and PullPull-$k$ variants. A
problem in establishing the hardness of the corresponding
Push-variants has been to prevent neighboring gadgets from
interfering, in particular for $k=*$~\cite{Demaine.etal/2003}. This is
not the case here, since gadgets can be easily isolated by inserting
holes between them. The more difficult part is to give a succinct
argument that pulling $k$ blocks at once never opens new paths, but we
conjecture that this is possible. Another open question is the
$\PSPACE$-hardness of these problems.

A interesting restriction could be to study \emph{handles}, where a
box can be pulled only in a direction, if the corresponding side has a
handle. Hardness of the problem where the handles can be specified as
part of the input follows from the theorems above. Are restricted
variants easier, for example the variant where a box has either only
up-down handles or left-right handles? Another interesting variant is
to combine push and pull moves when all blocks are movable. This may
allow us to answer positively the open question from
\cite{Demaine.etal/2003}, if there exists an ``interesting'' but
tractable block-moving puzzle.

\bibliographystyle{plainnat}
\bibliography{paper,complexity,comments}

\end{document}